\author[K. Turek]{Krzysztof Turek}\address{Jagiellonian University\\ Institute of Mathematics\\ 30-348 Cracow, Poland}
\email{krzysztof.turek@uj.edu.pl}
\title{Option pricing in constant elasticity of variance model with liquidity costs.}
\begin{document}
\baselineskip=17pt
\begin{titlepage}
\date{\today}
\subjclass[2010]{primary:93E20, secondary:60H05}\keywords{liquidity, option pricing, hedging, price impact, CEV}
\begin{abstract}
%\section*{\centering \begin{normalsize}Abstract\end{normalsize}}
%\begin{quotation}
%\noindent
Paper is based mainly on \cite{SemFin}. We generalize its thesis to constant elasticity model, which own previously used Black-Schoels model as a special case.\\The Goal of this article is to find optimal hedging strategy of European call/put option in illiquid environment. We understand illiquidity as a non linear transaction cost function depending only on rate of change of our portfolio. In case this function is quadratic, optimal policy is given by system of 3 PDE.\\
In addition we show, that for small $\epsilon$ costs of selling portfolio in time $T$ be important ($O(\epsilon)$) and shouldn't be neglected in Value function ($o(\epsilon^k)$- our result).

\end{abstract}
\maketitle

%\end{quotation}
%\clearpage
\theoremstyle{definition}
\newtheorem{definition}{definition}
\theoremstyle{plain}
\newtheorem{theorem}{theorem}
\theoremstyle{remark}
\newtheorem{collorary}{collorary}
\theoremstyle{remark}
\newtheorem{Lemma}{Lemma}
%\tableofcontents
\end{titlepage}
\section{Introduction}
Problems with liquidity concerns people as long as the very basic concept of money, although most of models of option pricing still does not take into account this phenomena properly, without neglecting others. Great impact of liquidity on, non only option, pricing was shown in e.g. \cite{cenaPlynnosci} \cite{Longstaff}. During last decades there appeared numerous proposal, addressing this problem. Some of them assume direct, permanent effect of agent's actions on asset prices (e.g. \cite{wplywNaRynek}), but more include only temporal effects like in \cite{BSRozszerzony}\cite{SemFin}. Most authors from the second group tend to agree, that the the average price of each unit of asset is a convex function of quantity of bought or sold asset, with minimum at $0$. Typically this function is assumed to be exponential \cite{BSRozszerzony} or quadratic \cite{SemFin} with good results. The second function could be seen as a Taylor polynomial of degree two for the first function \cite{BSRozszerzony}. Whatever of this functions is used, even short burst of great volatility may generate huge liquidity cost for hedging investor. Thus in pricing options with illiquidity cost, modeling volatility is crucial.\\
In this papier we tried to merge the  model of illiquidity costs proposed in \cite{SemFin} and Constant Elasticy of Variance (CEV in short, Cox, 1975). CEV model is generalisation of Black-Schoels, improved by allowing non-zero correlation between observed price and volatility. In practice the price drops are connected with greater volatility in the near future, as a consequence of price and volatility are negatively correlated \cite{cont2001empirical}. Popular name of this phenomena is leverage effect. CEV model however exhibits purely in short horizon. Intuitively, there are other than price factors affecting volatility. One may go even further and try to use SABR model, however calculations are becoming hideous very fast and in some points different, more advanced technic is required, due to exogenous volatility process. The main difference to case considered here is the appearance of new variable $\sigma$ - volatility in HJB equation and accordingly changed infinitesimal generator $L$ (see e.g. \cite{SABRgeneratorL} for $L$ for SABR).\\
In CEV model asset price dynamic is given by stochastic differential equation:
\begin{equation}\label{CEV}
dS_t=\mu S_t dt+S_t^{1+\gamma}\sigma dW_{t}
\end{equation}
where $\mu$ and $\sigma$ are playing the same role as their counterparts in Black-Schoels model. Parameter $\gamma$ describes how volatility reacts on low and high asset price.
\subsection{Assumptions.}
We will focus on the case of $\gamma<0$, since many empirical evidences \cite{glosten1993relation}\cite{brandt2004relationship}\cite{cont2001empirical} have shown that the relationship between the stock price and its return
volatility is negative. However we must assume $\gamma\geq -\frac{1}{2}$. For simplicity we assume, that $\mu=0$. It can be done by constantly discounting data by constant, known from beginning $\mu\ne0$. New $S_t$ is old $S_t e^{-\mu t}$. Data processed in this way will fit to $\mu=0$ case. The last result may be easily obtained by simple applying Ito Lemma.\\
Basic assumptions are taken from \cite{SemFin}. Let us assume, that liquidity cost is proportional to asset price with constant  ($l(\cdot)$, $l:\mathds{R}\to\mathds{R}^{+}$) depending only on number of units sold or bought ($h_t$). We also assume, that price change connected with liquidity is temporal and market almost immediately goes back to its equilibrium i.e. $S_t$ before my deal. Therefore the cost of buying $h_t$ units in moment $t$ and market price $S_t$ is:
$$l(h_t)S_t+h_t S_t$$
Similarly total cash flow of realizing strategy $h_t, t\in[t_0,t_1]$ is given by:
$$\int_{t_0}^{t^1}l(h_t)S_t+h_t S_t$$
This idea mimics stock's exchange mechanism, where every bid is processed as set of bids on one share. Because every bid is matched with best remaining opposite offer, our following bid will be matched with worser opposite offer than previous one. In formal language l() is strictly convex and it has exactly one minimum - in $0$. For sake of simplicity we assume, that $l(0)=0$ and $l()$ is differentiable. This implies, that $l'()$ is increasing (convexity). The cost of illiquidity during realization of strategy $h_t, t\in[0,T]$ with stock prizes $S_t, t\in[0,T]$ is given by integral:

\begin{equation}\label{KosztP}
\int_{0}^{T}l(h_{t})S_t dt
\end{equation}

The number of shares in portfolio in time $t$ we denote by $H_t$. Previously introduced denotes remain in following relationship with this one:
\begin{equation}\nonumber
H_t=H_0+\int_0^t h_s ds
\end{equation}
Too prove any meaningful results we must make some assumptions on $H$. We require:
\begin{equation}\label{Hwar}
\begin{split}
&E(\int_0^TH_t^2 S_t^2dS_t)<+\infty\\
&E(\int_0^Tl(h_t) S_tdt)<+\infty
\end{split}
\end{equation}
The first assures value of portfolio $H$ will not become too big, which is also expected cost of buying this portfolio without cost of illiqudity. The second says not to change portfolio too quickly (infinite illiquidity cost). Both conditions are quite natural (they are required only to prove, that solution of final HJB PDE will be true value function). Hedging strategies in different classical models fulfill similar conditions. The set of all dynamics $h$ of $H$, which fulfill our assumptions, will be called $\mathcal{H}$.
Let $x_0$ be initial cash value. Classic (without cost of illiquity) value of entire portfolio in time $t$ is denoted as $\xi_t$. It does not include cost of illiquidity to avoid additional complications. $\xi_t$ is given by formula:
\begin{equation}\nonumber
\xi_t=H_0S_0+x_0+\int_0^t H_t dS_t
\end{equation}
We want to hedge a option, which pay $G(S_T)$ at $T$. Its price $q(S,t)$ at $t$ is assumed to be a expected value of payment at the end. It is also value of self-financing arbitrage strategy (without illiquidity cost). This relationship might be expressed as partial differential equation (infinitesimal generator in CEV):
 \begin{equation}\nonumber
L(q)=0
\end{equation}
\begin{equation}\label{warBrzeg}
q(\cdot,T)=G(\cdot)
\end{equation}
where:
\begin{equation}\label{OpL}
L(\cdot)=\frac{1}{2}\sigma^2 S^{2+2\gamma}\frac{\partial^2}{\partial S^2}+\frac{\partial}{\partial t}
\end{equation}
We denote $\theta_t$ as self-financing strategy ($\theta_t=\frac{\partial q}{\partial S}$) in perfectly liquid environment. This will be the base for our strategy in illiquid environment. In order to prove later results (that HJB solution is a value function), we must assume, that $\theta_t$ is bounded as follows:
\begin{equation}\label{BoundedTheta}
\exists_{C>0, \alpha>0}\forall_{S>0, t\in[0,T]}:|\theta(t,S)|<C(1+S^\alpha)
\end{equation}
Last result - fact, that $V$ is Landau small o of $\epsilon^k$ requires additional assumption:
\begin{equation}\label{BoundedThetaS}
\exists_{C_{1}>0, \beta>0}\forall_{S>0, t\in[0,T]}:|\frac{\partial\theta(t,S)}{S}|<C(1+S^\beta)
\end{equation}

This assumption holds for European call/put options in CEV model (see formula for prizing European options in \cite{cox1976valuation}).\\
\section{Value function and HJB equation.}
Clearly $\xi_t$ perfectly fitting to option's price is cautiously, rapidly changing strategy. This results in huge cost of illiquidity. To minimize the cost of iliquidity, we should pick constant $h_t\equiv 0$. However this strategy is pointless from point of view of hedging. To achieve satisfying results in terms of hedging and cost of liquidity the author of \cite{SemFin} propose minimization of:
\begin{equation}\label{funkcjaCel1}
\psi_{0}=\frac{1}{2}E((G(S_T)-\xi_{t})^2)+\int_{0}^Tl(h_t)S_t dt
\end{equation}
We will try to minimize this expression as well. We must point out, that our last result says basically, that for very small illiquidity cost (small $\epsilon$) cost of selling portfolio in time $T$ shouldn't be neglected. In this form it is hard to handle, thus it need a transformation into more appropriate form. We use the Ito formula and the Ito isometry:
\begin{equation}\nonumber
\begin{split}
&\psi_{0}=\frac{1}{2}(x_0+H_0S_0-q_0(S_0))^2 +\frac{1}{2}E(\int_0^T (\theta_t(S_t)-H_t)^2\sigma^2 S^{2+2\gamma}dt +\int_0^TS_tl(h_t)dt)\\ &=\frac{1}{2}(x_0+H_0S_0-q_0(S_0))^2 +\psi
\end{split}
\end{equation}
We could omit a factors independent of chosen strategy. It does not change optimal strategies, so we will minimize just $\psi$. We must point out, that constants like $H_0$ still appear in $\psi$ in hidden form, because $H_t=H_0+\int_0^t h_s ds$. Finally our cost function $V(H,S,t)$ for HJB equation is given by:
\begin{equation}\label{Vrow}
\inf_{h\in\mathcal{H}}\frac{1}{2}E(\int_t^T (\theta(u,(S_u))-H_u)^2\sigma^2 S^{2+2\gamma}_udu +\int_t^TS_ul(h_u)du|H_t=H,S_t=S)
\end{equation}
We remark absence of $S_T$ - dependant factor in \ref{Vrow}, thus final condition in Hamilton - Bellman - Jacobi equation is $V(H,S,T)=0$. Equation itself:
\begin{equation}\nonumber
0=\inf_h V_t+hV_H+\frac{1}{2}\sigma^2S^{2+2\gamma}V_{SS}+ \frac{1}{2}\sigma^2S^{2+2\gamma}(\theta(t,S)-H)^2+Sl(h)
\end{equation}
This equation is quite long, to simplify notation we will use operator $L$, defined in $\ref{OpL}$.
\begin{equation}\label{HJB}
0=\inf_h hV_H+L(V)+ \frac{1}{2}\sigma^2S^{2+2\gamma}(\theta(t,S)-H)^2+Sl(h)
\end{equation}
$h$  still is a function of time, as $V$ is function of time, initial share price, initial numer of shares. For the sake of simplicity we omit all this variables, everywhere we can. We noticed, that stochasticity has disappeared and we deal with deterministic partial differential equation.
It is possible to go further with general $l(\cdot)$, but it causes the appearance of a convex conjugate in partial differential equation, which can not be handled easily. For this reason we pick simple strictly convex $l(\cdot)$ with simple convex conjugate. Now we will try to calculate it as function of values $S, H, t$ and values of function $V$. This will allow us to substitute $h$ in equation above and, as a result, we will get some PDE in standard form to solve.

\subsection{Calculating $h$}
Let's take a look at a previous equation, it poses only two terms dependant of $h$: $Sl(h)$ we $hV_H$. Without any assumptions on $h$ it can be calculated as infimum in previous equation over every $S, H, t, V$.
\begin{equation}\nonumber
h=arg\min\limits_{\tilde{h}} Sl(h)+hV_H
\end{equation}
We can rewrite \ref{HJB} equation by using the convex conjugate of $-f$:
%Tak rozumując można by zapisać równanie \ref{HJB} za pomocą sprzężenia wypukłego (transformata  Legendre) $-f$:
\begin{equation}\nonumber
0=L(V)+ \frac{1}{2}\sigma^2S^{2+2\gamma}(\theta-H)^2+S(-l)^{*}(V_H)
\end{equation}
Equation is hard to solve in general, so we will take particular $l$ to simplify calculations.
%Równanie takie jest trudne w rozwiązaniu w ogólności, skupimy się dlatego na znalezieniu takiego $l$ dla którego powyższe równanie się możliwie uprości. 
We are considering $V_H$ as a known for a moment and we assume differentiability of $l$. To calculate above infimum, we will use necessary condition for extremes for differentiable function. Since $l$ is also strictly convex and $hV_H$ is convex their sum is strictly convex and this condition is also sufficient.
%Ponieważ $V_H$ traktujemy w tym momencie jako stałą, zaś $l$ jest różniczkowalna mogę znaleźć po prostu punkty w których pochodna względem liczby rzeczywistej $\tilde{h}$ się zeruje. Ponieważ $l$ jest ściśle wypukła oraz $hV_H$ jest słabo wypukła (i różniczkowalna) ze względu na $h$, ich suma musi być ściśle wypukła (oraz różniczkowalna). Stąd też pochodna może się zerować najwyżej w jednym miejscu, które musi być minimum.
\begin{equation}\nonumber
Sl'(h)+V_H=0
\end{equation}
\begin{equation}\label{hInv}
l'(h)=-\frac{V_H}{S}
\end{equation}
Derivative of strictly convex function must be increasing and have Darboux property, thus must be continuous and injective. If inverse function of derivative exist, it is also increasing and continuous. To assure existence of inverse function to derivative, we assume surjectivity of derivative. Logical choice (for $l'$) of easily invertible, continuous function is affine function. $l$ has minimum at $0$, so $l'(0)=0$. Only affine functions, which fulfil this condition are linear ones. Also $l(0)=0$, so $l$ must be of form:
\begin{equation}\nonumber
l(h)=\frac{\epsilon}{2}h^2
\end{equation}
It is easy to proof, that this function satisfies all assumption imposed on $l$.

%\section{Wybór $l$ i wyprowadzenie końcowego PDE.}
\section{Derivation and solution of final PDE.}

%Łatwo sprawdzić, że jest to funkcja spełniająca wszelkie założenia poczynione względem $l$. Używając powyższego w \ref{hInv} dostaję:
By using \ref{hInv} we get the following:
\begin{equation}\label{ROWh}
h=-\frac{V_H}{S\epsilon}
\end{equation}
After substituting this to equation  \ref{HJB} we obtain:
%Korzystając z tego w \ref{HJB} uzyskuję:
%Po drobnych przekształceniach zapisuje równanie z użyciem operator $L$ zdefiniowanego w $\ref{OpL}$:
\begin{equation}\label{PDE}
0= L(V)+ \frac{1}{2}(\sigma^2S^{2+2\gamma}(\theta-H)^2-\frac{V_H^2}{2S\epsilon})
\end{equation}

The appearance of quadratic forms of $V_H$ only and linear by another derivatives suggests, that $V$ might be dependent only quadratically on $H$. This supposition turn out to be true. Suppose:
%Pojawianie się wyłącznie kwadratów przy $V_H$ i członów liniowych ze względu na pozostałe pochodne i samą funkcję $V$ sugeruje, że dobrą drogą może być przedstawienie $V$ zależnej najwyżej kwadratowo od $H$. To przypuszczenie okaże się potem prawdziwe. Przypuśćmy, że:
\begin{equation}\label{VabcH}
V(H,S,t)=a(S,t)H^2+b(S,t)H+c(t,S)
\end{equation}
for some $a, b, c$ $C^2$ functions according to $S$ variable and $C^1$ in $t$. Now, equation \ref{PDE} assumes form:
%dla pewnych funkcji $a, b, c$ klasy $C^2$ ze względu na $S$ i $C^1$ ze względu na $t$. Wtedy równanie \ref{PDE} przyjmuje postać:
\begin{equation}\nonumber
0= L(aH^2+bH+c)+ \frac{1}{2}(\sigma^2S^{2+2\gamma}(\theta-H)^2-\frac{(2aH+b)^2}{2S\epsilon})
\end{equation}
\begin{equation}\nonumber
0= L(a)H^2+L(b)H+L(c)+ \frac{1}{2}(\sigma^2S^{2+2\gamma}(\theta^2+2\theta H+H^2)-\frac{4a^2H^2+4abH+b^2}{2S\epsilon})
\end{equation}
Let us recall terminal condition from HJB $\forall_{H,S}:V(H,S,T)=0$. It implies the following conditions:
$$\forall_{S}: a(S,T)=0, b(S,T)=0, c(S,T)=0$$
%And observation, that $\forall_{H,t}:V(H,0,T)=0$:
%$$\forall_{t}: a(0,t)=0, b(0,t)=0, c(0,t)=0$$

By comprising factors at different powers of $H$ we get system of equation: 
%Przypomnijmy warunek końcowy równania HJB $V(H,S,T)=0$. Implikuje on warunki $a(S,T)=0$,$b(S,T)=0$,$c(S,T)=0$. Porównując współczynniki przy potęgach $H$ uzyskuje układ równań:
\begin{equation}\label{Ra}
0=L(a)+\frac{1}{2}\sigma^2S^{2+2\gamma}-\frac{2a^2}{S\epsilon}
\end{equation}
\begin{equation}\label{Rb}
0=L(b)-\sigma^2S^{2+2\gamma}\theta-\frac{2ab}{S\epsilon}
\end{equation}
\begin{equation}\label{Rc}
0=L(c)+\frac{1}{2}\sigma^2S^{2+2\gamma}\theta^2-\frac{b^2}{2S\epsilon}
\end{equation}
If we have a solution $a$ of equation \ref{Ra} equation \ref{Rb} will be easy by using technique analogous to Feynman-Katz representation. Similarly we would process with \ref{Rc} by using $b$ in the same manner.

Term $a^2$ is rather problematic to deal with. For this reasons, one $a$ we consider as a  known and the other one as a requested. This reasoning lead us to the auxiliary PDE:
\begin{equation}\label{auxRa}
0=L(\widetilde{a})+\frac{1}{2}\sigma^2S^{2+2\gamma}-\frac{2\widetilde{a}\widehat{a}}{S\epsilon}
\end{equation}
If $\widetilde{a}$ is solution of this PDE and is equal to $\widehat{a}$ then $\widetilde{a}$ is also a solution of \ref{Ra}. Let us assume that $\widehat{a}\geq 0$. 

\subsection{Integrability of analogue of Feynman-Kac representation for auxiliary PDE for $a$}
To solve the auxiliary PDE \ref{auxRa} we will use reasoning similar to the Feynman-Katz representation. We can not use it directly since $S^{1+\gamma}$ is not Lipschitz function. We will calculate solution as in Feynman-Kac formula, but we will prove its properties directly. This results in a equation:
%Do rozwiązania równania \ref{Ra} użyjemy reprezentacji Feynmana-Katza z dryftem $-S_t$ (i odwracam czas) zamiast klasycznego $W_t$. Ponieważ mamy człon $a^2$ jedno $a$ traktuje jako znane a drugie jako szukane. W ten sposób uzyskam równanie funkcyjne:
\begin{equation}\nonumber
\widetilde{a}(t,S)=\frac{1}{2}E(\int_t^T e^{-\int_t^u 2\widehat{a}(\nu,S_{\nu})\frac{d\nu}{\epsilon S_\nu}}\sigma^2S_u^{2+2\gamma}du)
\end{equation}
All terms are non-negative, continuous, thus $\widetilde{a}$ is also non-negative (possibly infinite). We must show, that right side of the equation is finite. Since exponent is always non-positive, whole exponential term is smaller than $1$.
\begin{equation}\nonumber
\frac{1}{2}E(\int_t^T e^{-\int_t^u 2\widehat{a}(\nu,S_{\nu})\frac{d\nu}{\epsilon S_\nu}}\sigma^2S_u^{2+2\gamma}du)\leq \frac{1}{2}E(\int_t^T \sigma^2S_u^{2+2\gamma}du)
\end{equation}
We can use Tonelli's theorem to interchange expected value and Lebesgue integral signs, if resulting equation is finite.
\begin{equation}\nonumber\frac{1}{2}E(\int_t^T \sigma^2S_u^{2+2\gamma}du)\stackrel{?}{=} \frac{1}{2}\int_t^T \sigma^2E(S_u^{2+2\gamma})du
\end{equation}
$S_u$ is CEV process, so its density is given by equation below \cite{schroder1989computing}. We will use original notation (only $\gamma$ will remain unchanged), because expressions are quite long.
$$k=\frac{1}{\sigma^2 4\gamma^2(u-t)}$$ 
$$x=kS_t^{-2\gamma}$$  
$$z=kS_u^{-2\gamma}$$
Also, we change variable ($dS_u=(-2\gamma)^{-1}k^{\frac{1}{-2\gamma}}w^{\frac{1+2\gamma}{-2\gamma}}dz$) in density:

\begin{equation}\nonumber
f(S_u|S_t,u>t)=\int_0^\infty 2\gamma k^{\frac{1}{-2\gamma}} (xz^{1-2\gamma})^{\frac{1}{-4\gamma}}\exp(-x-z)I_{\frac{1}{-2\gamma}}(2\sqrt{xz})(-2\gamma)^{-1}k^{\frac{1}{-2\gamma}}w^{\frac{1+2\gamma}{-2\gamma}}dz
\end{equation}

Now we can transform raw moment $E(S_u^{2+2\gamma})$ into form of integral:
\begin{equation}\nonumber
E(S_u^{2+2\gamma})=\int_0^\infty  (\frac{z}{k})^{\frac{2+2\gamma}{-2\gamma}} 2\gamma k^{\frac{1}{-2\gamma}} (xz^{1-2\gamma})^{\frac{1}{-4\gamma}} \exp(-x-z) I_{\frac{1}{-2\gamma}}(2\sqrt{xz}) (-2\gamma)^{-1}k^{\frac{1}{-2\gamma}} w^{\frac{1+2\gamma}{-2\gamma}}dz
\end{equation}
\begin{equation}\nonumber
E(S_u^{2+2\gamma})=\int_0^\infty  (\frac{z}{k})^{\frac{2+2\gamma}{-2\gamma}} 2\gamma (\frac{x}{z})^{\frac{1}{-4\gamma}} \exp(-x-z) I_{\frac{1}{-2\gamma}}2\sqrt{xz} dz
\end{equation}
\begin{equation}\nonumber
E(S_u^{2+2\gamma})=(\frac{x}{k})^{\frac{1}{-2\gamma}}\int_0^\infty  (\frac{z}{k})^{\frac{1+2\gamma}{-2\gamma}} 2\gamma (\frac{z}{x})^{\frac{1}{-4\gamma}} \exp(-x-z) I_{\frac{1}{-2\gamma}}(2\sqrt{xz}) dz
\end{equation}

This transformations are used in the detailed calculation of CEV European prising formula (more precisely $C_1$ term) in \cite{hsu2008constant}. This integral is $\frac{1+2\gamma}{-2\gamma}=\frac{1}{-2\gamma}-1$ moment of chi square distribution with $2-\frac{1}{\gamma}$ degrees of freedom and non centrality parametr $2x$. Moment generating function of non-centralized chi square distribution is:
$$
M(t;df,ncp)=\frac{\exp(\frac{ncp t}{1-2t})}{(1-2t)^{\frac{k}{2}}}$$
$n$ moment as a $n$-th derivative of $M$ at $t=0$. In this point $1-2t$ terms are equal $1$ and $\exp(\frac{ncp t}{1-2t})=1$. Thus $n$ moment is polynomial of order $n$ of variable $df$ and $ncp$. Coefficients are $\gamma$ deepened only. If $\frac{1}{-2\gamma}-1$ is not integer, we still can use some bigger integer and Holder inequality, $r>\frac{1}{-2\gamma}-1$:
$$E(X^r)\leq E(X^{\frac{1}{-2\gamma}-1})^{\frac{r}{\frac{1}{-2\gamma}-1}}$$
Every polynomial of degree $n$ can be majored by some constant $C$ (e.g. sum of absolute values of polynomial coefficients) multiplied by absolute value of highest order term $|x|^n$ plus $1$. In our case $x$ is non negative and absolute value can be dropped. We will slowly return to our old notations. For $\frac{1}{-2\gamma}-1$ integer we get:
\begin{equation}\nonumber
E(S_u^{2+2\gamma})\leq C(\frac{x}{k})^{\frac{1}{-2\gamma}}(x^{\frac{1}{-2\gamma}-1}+1)
\end{equation}
\begin{equation}\nonumber
E(S_u^{2+2\gamma})\leq C(S_t^{-2\gamma})^{\frac{1}{-2\gamma}}((kS_t^{-2\gamma})^{\frac{1}{-2\gamma}-1}+1)
\end{equation}
Let us remind, that $k$ includes time variable, which is important from our perspective. Terms like $\gamma$, $\sigma$ we add to new constant $C'$.
\begin{equation}\nonumber
E(S_u^{2+2\gamma})\leq \widehat{C}(S_t^{2+2\gamma}+1)((T-u)^{1+2\gamma}+1)
\end{equation}
Finally:
\begin{equation}\nonumber\frac{1}{2}\int_t^T \sigma^2E(S_u^{2+2\gamma})du\leq \frac{1}{2}\int_t^T \sigma^2 \widehat{C}(S_t^{2+2\gamma}+1)((T-u)^{1+2\gamma}+1)du
\end{equation}
Right side is finite (polynomial are easily integrable), thus assumptions of Tonelli theorem are fulfilled. Careful reader my noticed that using Holder inequality provide similar result. %Showing that $\frac{1}{2}E(\int_t^T \sigma^2S_u^{2+2\gamma}du)$ is square integrable is analogous.
\begin{equation}\nonumber\frac{1}{2}E(\int_t^T \sigma^2S_u^{2+2\gamma}du)=\frac{1}{2} \sigma^2\widehat{C}(S_t^{2+2\gamma}+1)(\frac{(T-t)^{2+2\gamma}}{2+2\gamma}-\frac{(T-T)^{2+2\gamma}}{2+2\gamma}+T-t)
\end{equation}
By Tonelli theorem $\frac{1}{2}E(\int_t^T \sigma^2S_u^{2+2\gamma}du)$ is also finite and so is our candidate for auxiliary PDE solution: \begin{equation}\nonumber\frac{1}{2}E(\int_t^T e^{-\int_t^u 2\widehat{a}(\nu,S_{\nu})\frac{d\nu}{\epsilon S_\nu}}\sigma^2S_u^{2+2\gamma}du)\end{equation}
%\subsection{Existence of solution of auxiliary PDE.}
%Equation \ref{Ra} is parabolic with Holder coefficients, thus its fundamental solution exists \cite{friedman1964partial}. This result can be proven by parametrix method.
\subsection{Solving PDE for $a$, $b$ and $c$}
We check if following function (integrable) is indeed a solution of auxiliary PDE.
\begin{equation}\nonumber
\widetilde{a}(t,S)=\frac{1}{2}E(\int_t^T e^{-\int_t^u 2\widehat{a}(\nu,S_{\nu})\frac{d\nu}{\epsilon S_\nu}}\sigma^2S_u^{2+2\gamma}du)
\end{equation}
This function is $t$-differentiable, due Lemma 7.3.2 \cite{oksendal2003stochastic}. We denote $X_t=e^{\int_t^T 2\widehat{a}(\nu,S_{\nu})\frac{d\nu}{\epsilon S_\nu}}$, $Z_t=\frac{1}{2}\int_t^T e^{-\int_t^u 2\widehat{a}(\nu,S_{\nu})\frac{d\nu}{\epsilon S_\nu}}\sigma^2S_u^{2+2\gamma}du$. First of all let us calculate derivative of $X_tZ_t$ (it is well defined):
\begin{equation}\nonumber Z_t X_t=\frac{1}{2} \int_t^T e^{\int_u^T 2\widehat{a}(\nu,S_{\nu})\frac{d\nu}{\epsilon S_\nu}}\sigma^2S_u^{2+2\gamma}du\end{equation}
\begin{equation}\nonumber
d(Z_u X_u)=-\frac{1}{2} e^{\int_u^T 2\widehat{a}(\nu,S_{\nu})\frac{d\nu}{\epsilon S_\nu}}\sigma^2S_u^{2+2\gamma}du
\end{equation}

\begin{dmath*}
d(Z_u)=-(e^{-\int_u^T 2\widehat{a}(\nu,S_{\nu})\frac{d\nu}{\epsilon S_\nu}})\frac{1}{2} e^{\int_u^T 2\widehat{a}(\nu,S_{\nu})\frac{d\nu}{\epsilon S_\nu}}\sigma^2S_u^{2+2\gamma}du+\penalty10 2\widehat{a}(u,S_{u})\frac{du}{\epsilon S_u}e^{-\int_u^T 2\widehat{a}(\nu,S_{\nu})\frac{d\nu}{\epsilon S_\nu}}(\int_u^T \frac{1}{2} e^{\int_\xi^T 2\widehat{a}(\nu,S_{\nu})\frac{d\nu}{\epsilon S_\nu}}\sigma^2S_\xi^{2+2\gamma}d\xi)
\end{dmath*}
\begin{dmath*}
d(Z_u)=-\frac{1}{2}\sigma^2S_u^{2+2\gamma}du+\penalty10 2\widehat{a}(u,S_{u})\frac{du}{\epsilon S_u}(\int_u^T \frac{1}{2} e^{\int_u^\xi 2\widehat{a}(\nu,S_{\nu})\frac{d\nu}{\epsilon S_\nu}}\sigma^2S_\xi^{2+2\gamma}d\xi)
\end{dmath*}

We are using Dynkin's formula to $\widetilde{a}(t,S)$ to get $L(\widetilde{a}(t,S))$:
\begin{dmath*}
L(\widetilde{a}(t,S))= lim_{r\downarrow 0} \frac{E[\widetilde{a}(S_{t+r},t+r)|S_t=S]-E[\widetilde{a}(S_{t},t)|S_t=S]}{r} = lim_{r\downarrow 0}\frac{E[Z_{t+r}|S_t=S]-E[Z_t|S_t=S]}{r} \penalty10 = lim_{r\downarrow 0}\frac{ \int^{t+r}_t E(-\frac{1}{2}\sigma^2S_u^{2+2\gamma}+ 2\widehat{a}(u,S_{u})\frac{1}{\epsilon S_u}Z_t|S_t\penalty10000=S)}{r} \penalty10 = E(-\frac{1}{2}\sigma^2S_t^{2+2\gamma}+2\widehat{a}(t,S_{t})\frac{1}{\epsilon S_t}Z_t|S_t\penalty10000=S)\penalty10 =-\frac{1}{2}\sigma^2S_t^{2+2\gamma}+\penalty10 2\widehat{a}(t,S_{t})\frac{1}{\epsilon S_t} E(Z_t|S_t\penalty10000=S)
\end{dmath*}
Thus $\widetilde{a}=E(Z_t|S_t=S)$ is solution of auxiliary equation \ref{auxRa}.

\begin{equation}\nonumber
a(t,S)=\frac{1}{2}E(\int_t^T e^{-\int_t^u 2a(\nu,S_{\nu})\frac{d\nu}{\epsilon S_\nu}}\sigma^2S_u^{2+2\gamma}du)
\end{equation}
\begin{Lemma}
There exists unique solution of PDE \ref{Ra} with terminal condition:
\begin{equation}\nonumber
\alpha(T,\cdot)\equiv 0
\end{equation}
\end{Lemma}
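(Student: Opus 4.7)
My plan is to recast PDE \ref{Ra} as the fixed-point equation $a = T(a)$ for the operator
\[
T(\widehat{a})(t,S) = \frac{1}{2}E\left(\int_t^T e^{-\int_t^u 2\widehat{a}(\nu,S_\nu)\frac{d\nu}{\epsilon S_\nu}}\sigma^2 S_u^{2+2\gamma}\,du \,\bigg|\, S_t = S\right),
\]
acting on the convex set $\mathcal{A}$ of measurable $\widehat{a}\colon [0,T]\times \mathds{R}^{+} \to \mathds{R}^{+}$ dominated by the explicit bound $\overline{a}(t,S) := \frac{1}{2}E\bigl(\int_t^T \sigma^2 S_u^{2+2\gamma}\,du\,\big|\, S_t = S\bigr)$, finite by the preceding subsection. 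The Dynkin calculation already carried out shows that every non-negative fixed point of $T$ with polynomial growth is a classical solution of \ref{Ra} satisfying $a(T,\cdot)\equiv 0$; conversely, applying It\^o's formula to $\widetilde{a}(u,S_u)\exp\bigl(-\int_t^u 2\widehat{a}\,d\nu/(\epsilon S_\nu)\bigr)$ and taking expectations recovers the integral equation from any such PDE solution. So it suffices to prove that $T$ admits a unique fixed point in $\mathcal{A}$.

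For uniqueness, suppose $a_1,a_2\in\mathcal{A}$ are both fixed points. The elementary bound $|e^{-x}-e^{-y}|\leq |x-y|$ for $x,y\geq 0$ together with Fubini gives
\[
|a_1(t,S)-a_2(t,S)| \leq E\left(\int_t^T \frac{\sigma^2 S_u^{2+2\gamma}}{\epsilon}\int_t^u \frac{|a_1-a_2|(\nu,S_\nu)}{S_\nu}\,d\nu\,du \,\bigg|\, S_t = S\right).
\]
Iterating this Volterra-type inequality and exploiting the polynomial moment estimate $E(S_u^{2+2\gamma})\leq \widehat{C}(S_t^{2+2\gamma}+1)((T-u)^{1+2\gamma}+1)$ from the preceding subsection yields a factorially shrinking kernel, forcing $a_1 \equiv a_2$ first on a short terminal interval $[T-\delta,T]$ and then globally by a step-by-step extension.

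For existence I would use Picard iteration: set $a_0\equiv 0$ and $a_{n+1} := T(a_n)$. Since $T(\mathcal{A})\subset\mathcal{A}$ (because $T$ is order-reversing and $T(0)=\overline{a}$), the iterates all lie in $\mathcal{A}$, and the same Gronwall-type estimate as above shows that $\|a_{n+1}-a_n\|$ contracts geometrically on the terminal interval $[T-\delta,T]$ (in a weighted sup norm compatible with the polynomial bound). Hence $(a_n)$ is Cauchy and its limit is a fixed point of $T$ on $[T-\delta,T]$; step-by-step extension to $[0,T]$ produces the desired global solution.

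The principal technical obstacle is the inner weight $1/S_\nu$ appearing in the iterated expectation: for CEV with $\gamma<0$ the reciprocal $1/S$ can fail to be integrable, so one must carefully justify every interchange of expectation and Lebesgue integration. Working under the a priori bound $\widehat{a}\leq \overline{a}$ keeps the exponent controlled, and the standing assumption $\gamma\geq -1/2$ is precisely what guarantees that the raw moments $E(S_u^{2+2\gamma})$ are finite and that the Gronwall iteration converges absolutely; if necessary a localization by stopping times $\tau_n = \inf\{u : S_u < 1/n\}$ can be inserted to handle the singularity rigorously.
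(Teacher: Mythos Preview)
Your fixed-point reformulation via the operator $T$ matches the paper's operator $\Psi$ exactly, and the paper likewise proves the lemma by finding a fixed point of $\Psi$. Where you diverge is in \emph{how} you establish existence and uniqueness of that fixed point. You pursue a metric/Banach route: a Lipschitz estimate $|e^{-x}-e^{-y}|\le |x-y|$ feeding a Volterra iteration, geometric contraction on a short terminal strip, then step-by-step extension. The paper instead runs a purely \emph{order-theoretic} argument: since $\Psi$ is antitone ($x\ge y\Rightarrow \Psi(x)\le\Psi(y)$), the iterates from $a^{(0)}\equiv 0$ split into an increasing even subsequence and a decreasing odd subsequence, sandwiched between each other; Monotone Convergence gives limits $\underbar a\le\bar a$ satisfying $\Psi(\bar a)=\underbar a$, $\Psi(\underbar a)=\bar a$, and subtracting the two resulting PDEs yields $L(\bar a-\underbar a)=0$ with zero terminal data, forcing $\bar a=\underbar a$. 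Uniqueness follows by squeezing any other fixed point between the even and odd iterates.

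The practical payoff of the paper's route is precisely the obstacle you flag at the end. Your contraction estimate produces a factor $1/S_\nu$ under the expectation, and for CEV with $\gamma<0$ this reciprocal need not have finite moments; the localization you sketch may be workable but is nontrivial to close. The monotone argument never differentiates the exponential and never sees $1/S_\nu$ in an integrability-critical position: all it needs is $e^{-(\cdot)}\le 1$ and the finiteness of $E\!\int_t^T \sigma^2 S_u^{2+2\gamma}\,du$ already established. So both approaches are viable in principle, but the paper's is better adapted to the CEV singularity and avoids the one point where your sketch is genuinely incomplete.
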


\begin{proof}
Any fixed point of $\Psi$ function (definition below) is solution of functional equation given by  Feynman - Katz analogous. This means it must be solution of \ref{Ra}. Conversely, any solution of \ref{Ra} must be fixed point of $\Psi$.
%Gdybyśmy pokazali, że poniższa funkcja $\Psi$ posiada dokładnie jeden punkt stały, musiałby być on być jedynym rozwiązaniem powyższego równania funkcyjnego. To z kolei oznacza, że dzięki reprezentację Feynmana-Katza musi być jedynym rozwiązaniem równania \ref{Ra}.
\begin{equation}\label{PsiFunctional}
\Psi(\alpha(\cdot,\cdot))=\frac{1}{2}E(\int_t^T e^{-\int_t^u 2\alpha(\nu,S_{\nu})\frac{d\nu}{\epsilon S_\nu}}\sigma^2S_u^{2+2\gamma}du)
\end{equation}
where $\alpha$ suffices following terminal condition:
\begin{equation}\nonumber
\alpha(T,\cdot)\equiv 0
\end{equation}
To simplify notation we won't repeat $|S_t=S$ part below. It should be in any equation involving equation inside of $Psi$ and analogous of Feynman-Katz representation.
It is clear, that $\Psi(\cdot)\geq 0$. We construct sequence recursively:
%Zauważmy na początku, że $\Psi(\cdot)\geq 0$. Konstruujemy następujący ciąg:
$$a^{(0)}\equiv 0$$
$$a^{(n+1)}=\Psi(a^{(n)})$$
We noticed, that $\Psi$ is non-decreasing:
%Zauważmy, że $\Psi$ jest funkcją nierosnącą:
\begin{equation}\label{PsiIn}
x\geq y \Rightarrow \Psi(x) \leq \Psi(y) 
\end{equation}
Since arguments of $\Psi$ are functions, inequality on the left side must hold for all arguments of $x$, $y$, like for the right side. We must point out, that if the left side inequality is strict on some set of non zero Lebesgue measure, right side inequality is strict. By repetitive using this inequality we get:
%Stąd:
\begin{equation}\nonumber
a^{(0)}=0\leq a^{(2)}\leq a^{(1)}
\end{equation}
and
\begin{equation}\nonumber
a^{(1)}\geq a^{(3)}\geq a^{(2)}
\end{equation}
$a^{(2n)}$ is increasing, $a^{(2n+1)}$ decreasing and:
\begin{equation}\nonumber
a^{(2n)}\leq a^{(2n+2)}\leq a^{(2n+1)}
\end{equation} 
\begin{equation}\nonumber
a^{(2n)}\leq a^{(2n+1)}\leq a^{(2n-1)}
\end{equation} 
The last two inequalities may be proven by mathematical induction (all in one go). Induction step is just using \ref{PsiIn} twice.\\
$a^{(2n+1)}$ is decreasing and bounded from below by $0$. Therefore this sequence must have a limit $\bar{a}$. Similarly  $a^{(2n)}$ is increasing and bounded by any term of $a^{(2n+1)}$, so also by $\bar{a}$. Because of this, $a^{(2n)}$ have a limit $\underbar{a}$ and $\underbar{a}\leq\bar{a}$. Due to Monotone Convergence Theorem we have $\Psi(\bar{a})=\underbar{a}$ and $\bar{a}=\Psi(\underbar{a})$.

%Można to dowieść korzystając z indukcji matematycznej (wszystko na raz). Krok indukcyjny polega na dwukrotnym obkładaniu nierówności nierosnącą $\Psi$. $a^{(2n+1)}$ jest malejący ograniczony od dołu przez $0$. Ma zatem granicę $\bar{a}$. Podobnie $a^{(2n)}$jest rosnący i ograniczony przez dowolny $a^{(2n+1)}$, więc i $\bar{a}$, czyli $a^{(2n)}$ też ma granicę $\underbar{a}$. Oczywiście $\Psi(\bar{a})=\underbar{a}$ oraz $\Psi(\underbar{a})=\bar{a}$. Z nierówności pomiędzy podciągami uzyskuję dodatkowo $\bar{a}\geq\underbar{a}$. Korzystając teraz z reprezentacji Feynamana-Katza w drugą stronę uzyskuje:
\begin{equation}\nonumber
0=L(\bar{a})+\frac{1}{2}\sigma^2S^{2+2\gamma}+\frac{4\bar{a}\underbar{a}}{S\epsilon}
\end{equation}
\begin{equation}\nonumber
0=L(\underbar{a})+\frac{1}{2}\sigma^2S^{2+2\gamma}+\frac{4\bar{a}\underbar{a}}{S\epsilon}
\end{equation}
Subtracting by sides:
\begin{equation}\nonumber
0=L(\bar{a}-\underbar{a})\Leftrightarrow \forall_t E(\bar{a}(S_t,t)-\underbar{a}(S_t,t))\equiv const
\end{equation}
We recall terminal condition $a(S,T)=0$ and inequality $\bar{a}-\underbar{a}\geq 0$. By using this two and the above equation, we obtain $\bar{a}=\underbar{a}$. Therefore we have at least one solution.\\
%Doddając do tego fakt, że $\bar{a}-\underbar{a}\geq 0$ i warunek końcowy $a(S,T)=0$ uzyskuję $\bar{a}=\underbar{a}$.
Intuitively there should be only one solution for hedging problem. It comes to be true, what we will prove briefly. We suppose, there is some $\tilde{a}$ solution of PDE and fixed point of $\Psi$. Because $\tilde{a}=\Psi(\tilde{a})$, also $\tilde{a}\geq 0$. We recall, that $\Psi$ is non-decreasing, therefore:
%Pozostało nam jeszcze dowieść jedyność. Niech $b$ będzie dowolnym punktem stałym $\Psi$. Ponieważ $b=\Psi(b)$, $b$ musi być nieujemne. Przypomnijmy, że $\Psi$ jest funkcją nierosnącą, zatem zachodzi:
\begin{equation}\nonumber
a^{(0)}=0\geq \tilde{a}=\Psi(\tilde{a})\geq \Psi(0)=a^{(1)}
\end{equation}
Again, we use mathematical induction for $n$.%by? over?
%Indukcja po $n$ (krok indukcyjny - dwukrotne obłożenie nierówności $\Psi$) pozwala dowieść nieco ogólniejszego faktu:
\begin{equation}\label{SzacA}
a^{(2n)}\geq \tilde{a}\geq a^{(2n+1)}
\end{equation}
For $a^{(2n)}$, $ a^{(2n+1)}$ let's take limit as $n$ goes to infinity. At this point we use again Monotone Convergence Theorem to achieve $\bar{a}=\tilde{a}$.\\
%Przechodząc teraz do granicy z $n\to\infty$ uzyskuję z ciągłości $\Psi$ równość
%In order to prove regularity condition from thesis of lemma:
%\begin{equation}\nonumber
%\begin{split}
%&a=\Psi(a)=\frac{1}{2}E(\int_t^T e^{-\int_t^u 2a(\nu,S_{\nu})\frac{d\nu}{\epsilon S_\nu}}\sigma^2S_u^{2+2\gamma}du)\\\ &\leq \frac{1}{2}E(\int_t^T\sigma^2S_u^{2+2\gamma}du)\\ &=\frac{1}{2}E((\int_t^T dS_u)^2)=E((S_T-S_t)^2)=Var(S_T|S_t=S)\leq C(S_t^{2+2\gamma})
%\end{split}
%\end{equation}
%For some $C>0$ (it depends on $T, t, \sigma, \gamma$). Last inequality requires direct integration using form of density function given in \cite{cox1976valuation} (or equation (1) in \cite{schroder1989computing}). In the same section of \cite{schroder1989computing} are equations to calculate first conditional moment ($E(S_T|S_T>E)$), which can with small changes (multiplying integrand by $S_T$ and taking $E=0$) applied to second unconditional moment (non central moment). At the end we get that $E(S_T^2)$ is bounded by $S^2_t$ multiplied $\frac{1}{-2\gamma}$ (it is $>0$) moment of non central chi square distribution and constant (both depends only on $T, t, \sigma, \gamma$). Whole calculation is rather tedious and requires introducing original denotations, thus we will skip this one. We remind $\S_t=S$ was in $\Psi$ all time, but for simplicity reasons we were not writing it.
\end{proof}
We can use analogous of Feynman - Katz representation for $b$ and $c$ as we did for $a$ in auxiliary PDE, due to condition on $\theta$ \ref{BoundedTheta}. There are no essential difference in profs, so we will skip them. Results are as follows:
\begin{equation}\nonumber
b(t,S)=E(\int_t^T e^{-\int_t^u 2a(\nu,S_{\nu})\frac{d\nu}{\epsilon S_\nu}}\theta(u,S_u)\sigma^2S_u^{2+2\gamma}du)
\end{equation}
\begin{equation}\nonumber
c(t,S)=\frac{1}{2}E(\int_t^T e^{-0}(\theta^2(u,S_u)\sigma^2S_u^{2+2\gamma}+ \frac{2b^2(u,S_{u})}{\epsilon S_u})du)
\end{equation}
\subsection{Solution of HJB PDE is true Value Function.}
Exponential functions, their sums, compositions and integrals are smooth ($C^\infty$), thus $a, b, c$ are smooth. Therefore $V=H^2a+Hb+c$ is also smooth (on $H$ variable too). We have unique solutions $a$, $b$, $c$, which also fulfil same kind of condition \ref{BoundedTheta}. put on $\theta$. It can be shown directly from above equations. Entire $exp (..)$ is bounded by $1$ from the above and non negative. $\theta$ and $\sigma^2S_u^{2+2\gamma}$ also fulfills this condition and integrals over finite interval (due to Holder inequality) preserve this condition. Moreover, if we use $a^{(2n)}$ or $a^{(2n+1)}$ instead of $a$ we achieve upper and lower bonds for $b$ and $c$. It follows from \ref{SzacA}. and monotonic nature regarding $a$ (one could prove it in the same way we did for $\Psi$) of the equations for $b$ and $c$.\\
Let us recall equation for $V(H,S,t)$ \ref{Vrow}:
\begin{equation}\nonumber
\inf_{h\in\mathcal{H}}\frac{1}{2}E(\int_0^T (\theta(t,(S_t))-H_t)^2\sigma^2 S^{2+2\gamma}dt +\int_0^TS_tl(h_t)dt|H_0=H,S_0=S)
\end{equation}
Clearly, for any $h\in\mathcal{H}$ and $H$ connected to it following inequality holds:
\begin{equation}\nonumber
\begin{split}
&V(H_t,S_t,t)\leq \frac{1}{2}E(\int_t^T (\theta(u,(S_u))-H_u)^2\sigma^2 S^{2+2\gamma}_udu +\int_t^TS_ul(h_u)du|H_t=H,S_t=S|\mathcal{F}_t)\\
&\frac{1}{2}E(\int_0^T 2(\theta(u,(S_u))^2+H_u^2)\sigma^2 S^{2+2\gamma}_udu +\int_0^TS_ul(h_u)du|H_t=H,S_t=S|\mathcal{F}_t)
\end{split}
\end{equation}
Now, we use \ref{Hwar} (condition for being in $\mathcal{H}$):
\begin{equation}\nonumber
\begin{split}
&\int_0^TH_t^2 S_t^2dS_t<+\infty\\
&\int_0^Tl(h_t) S_tdt<+\infty
\end{split}
\end{equation}
and condition on $\theta$ \ref{BoundedTheta}:
\begin{equation}\nonumber
\exists_{C>0, \alpha>0}\forall_{S>0, t\in[0,T]}:|\theta(t,S)|<C(1+S^\alpha)
\end{equation}
If we sum up all these conditions and inequality above, we have process $V(H_t,S_t,t)$ bounded by uniformly integrable martingale. In fact conditions condition for being in $\mathcal{H}$ was needed only to ensure that these integrals exist and are finite. Strategies with infinite integrals won't achieve infimum, because constant strategy has finite integrals. Thus in fact we find infimum over larger set. From Ito lemma following process is local martingale plus non decreasing term (Doob–Meyer decomposition):
\begin{equation}\nonumber
Y_t=\frac{1}{2}E(\int_0^t (\theta(u,(S_u))-H_u)^2\sigma^2 S^{2+2\gamma}_udu +\int_0^tS_ul(h_u)du|H_t=H,S_t=S)+V(t,H_t,S_t)
\end{equation}
This process is connected with sequence of stoping times $\tau_m$ approaching $T$, which reduced stopped local martingale to martingale. Due to non decreasing term we have inequality between this stopped by $\tau_m$ $Y_t$ at $0$ and at $T$ ($\tau_m\land T=\tau_M$, because $\tau_m\leq T$ and $\tau_m\land 0=0$, because $\tau_m\geq 0$):
\begin{equation}\nonumber
V(0,H_0,S_0)\leq\frac{1}{2}E(\int_0^{\tau_m} (\theta(u,(S_u))-H_u)^2\sigma^2 S^{2+2\gamma}_udu +\int_0^{\tau_m}S_ul(h_u)du|H_t=H,S_t=S)+V(\tau_m,H_{\tau_m},S_{\tau_m})
\end{equation}
Let us note $V(T,H_{T},S_{T})=0$ from its definition. As $\tau_m$ approaches $T$ integrals monotonically approaches (in $L^1$):
\begin{equation}\nonumber
V(0,H_0,S_0)\leq\frac{1}{2}E(\int_0^T (\theta(u,(S_u))-H_u)^2\sigma^2 S^{2+2\gamma}_udu +\int_0^TS_ul(h_u)du|H_t=H,S_t=S)
\end{equation}

We show that optimal control $h$ always exists and $V$ solve HJB equation uniquely in set of appropriately smooth $C^n$ on each variable, then $V$ must be a true value function.

\section{Bounds of the function $V$}
Let us remind definition of value function $V$ (\ref{Vrow}):
\begin{equation}\nonumber
\inf_{h\in\mathcal{H}}\frac{1}{2}E(\int_t^T (\theta(u,(S_u))-H_u)^2\sigma^2 S^{2+2\gamma}_udu +\int_t^TS_ul(h_u)du|H_t=H,S_t=S)
\end{equation}
All terms are nonnegative, thus the function $V$ is bounded from below by $0$. Infimum over all $h$ must be smaller than value for a any particular $h$, thus value of above equation with a chosen $h$ is upper bound of $V$. In this moment we could write down this $h$ and calculate this bound, but we want to show a some intuitive reason to choose this one.\\
\subsection{Intuition behind choice of $\bar{h}$} 
HJB equation for our problem is following \ref{HJB}:
\begin{equation}\nonumber
0=\inf_h hV_H+L(V)+ \frac{1}{2}\sigma^2S^{2+2\gamma}(\theta(t,S)-H)^2+Sl(h)
\end{equation}
The term $L(V)$ describe change of expected value of the function $V$ in time. In ideal case this term is a $0$ or at least very small. We will assume $L(V)=0$ this for a moment. Additionally we substitute $l(\bar{h})=\frac{\epsilon \bar{h}^2}{2}$ into the equation.
\begin{equation}\nonumber
0=\inf_{\tilde{h}} \tilde{h}V_H+ \frac{1}{2}\sigma^2S^{2+2\gamma}(\theta(t,S)-H)^2+\frac{S\epsilon \tilde{h}^2}{2}
\end{equation}
We find the $\bar{h}$ in the same way as previously the $h$:
\begin{equation}\nonumber
\bar{h}=arg\min_{\tilde{h}} \tilde{h}V_H+L(V)+ \frac{1}{2}\sigma^2S^{2+2\gamma}(\theta(t,S)-H)^2+\frac{S\epsilon \tilde{h}^2}{2}
\end{equation}
Quadratic functions are convex, so they have one extremum and it is minimum. we calculate derivative of term inside of a argmin in respect to $\tilde{h}$.
\begin{equation}\nonumber
0=V_H+S\epsilon \bar{h}
\end{equation}
\begin{equation}\nonumber
\bar{h}=-\frac{V_H}{S\epsilon} 
\end{equation}
In this form $\bar{h}$ still have a term $V_H$. If we use this one we will obtain term $V_H^2$ in PDE later. To avoid problems we have encounter earlier, we will use HJB equation (in assumed in this section form, not true one) again. After substituting a $\bar{h}$, it looks as follows: 
\begin{equation}\nonumber
0=-\frac{V_H}{S\epsilon}V_H+ \frac{1}{2}\sigma^2S^{2+2\gamma}(\theta(t,S)-H)^2+\frac{S\epsilon (-\frac{V_H}{S\epsilon})^2}{2}
\end{equation}
\begin{equation}\nonumber
\frac{\bar{h}^2 S\epsilon}{2}=\frac{V_H^2}{2S\epsilon}= \frac{1}{2}\sigma^2S^{2+2\gamma}(\theta(t,S)-H)^2
\end{equation}
\begin{equation}\nonumber
\bar{h}^2=\frac{\sigma^2S^{2+2\gamma}(\theta(t,S)-H)^2}{S\epsilon}
\end{equation}
\begin{equation}\nonumber
\bar{h}=\sigma S^{\frac{1}{2}+\gamma}(\theta(t,S)-H)\epsilon^{-\frac{1}{2}}
\end{equation}
This $\bar{h}$ will be used to calculate upper bound.
\subsection{Upper bound}
We calculate value $V$ of strategy $\bar{h}$ by substituting to original equation \ref{funkcjaCel1}:
\begin{equation}\label{FK2D}
v(H,S,t|\bar{h})= E(\int_t^T \sigma^2S^{2+2\gamma}_\nu(\theta(\nu,S_\nu)-H_\nu)^2 d\nu|H_t=H,S_t=S)
\end{equation}Let us remind:
\begin{equation}\nonumber dH_\nu=\bar{h}_\nu d\nu=\sigma S^{\frac{1}{2}+\gamma}(\theta(\nu,S_\nu)-H_\nu)\epsilon^{-\frac{1}{2}}d\nu
\end{equation}

Main problem with determining limiting behavior of $V$ regardless to $\epsilon$ lies in $H_{\nu}$, as it contains $\epsilon$ in hidden form. In order to get rid of a problematic term we different terms below integral, build and solve SDE given in that way and finish calculations. Let us define $Y_t$ as follows:
\begin{equation}\nonumber
Y_{\nu}=\theta(\nu,S_\nu)-H_\nu
\end{equation}
$S_\nu, H_\nu$ still  have a initial conditions $H_t=H,S_t=S$. Before we start calculating $dY_t$ is important to calculate a derivative of $\theta(\nu,S_{\nu})=\frac{\partial q_\nu}{\partial S}$.
\begin{equation}\nonumber
L(\theta)=L(\frac{\partial q_\nu}{\partial S})=\frac{\partial L(q_\nu)}{\partial S}=\frac{\partial 0}{\partial S}=0
\end{equation}
This is implies ($L$ is infinitesimal generator of $S_{\nu}$), that $d\theta$ have form:
\begin{equation}\nonumber
d\theta(\nu,S_{\nu})=\sigma S^{1+\gamma}_\nu\frac{\partial \theta}{\partial S}(\nu,S_{\nu})dW_{\nu}
\end{equation}
Stochastic derivative of $Y_t$ is given by:
\begin{equation}\nonumber
dY_\nu=\sigma S^{1+\gamma}_\nu\frac{\partial \theta(\nu,S_\nu)}{\partial S}dW_\nu-\sigma S^{\frac{1}{2}+\gamma}_{\nu}(\theta(\nu,S_\nu)-H_\nu)\epsilon^{-\frac{1}{2}}d\nu
\end{equation}
Now we write this down using $Y_t$ notation:
\begin{equation}\nonumber
dY_\nu=\sigma S^{1+\gamma}_{\nu}\frac{\partial \theta(\nu,S_\nu)}{\partial S}dW_\nu-\sigma S^{\frac{1}{2}+\gamma}_{\nu}\epsilon^{-\frac{1}{2}}Y_{\nu}d\nu
\end{equation}
This stochastic (inhomogeneous) linear differential equation might be solved (uniquely) in a standard way (method of variation of parameters for more details see e.g. \i{5.1 Examples and Some Solution Methods} \cite{oksendal2003stochastic}):
\begin{dmath*}
Y_{\xi}=C(H,S,t)\exp(-\sigma\epsilon^{-\frac{1}{2}}\int_t^{\xi} S^{\frac{1}{2}+\gamma}_{\nu}d\nu) + \penalty10\sigma \int_t^{\xi}S^{1+\gamma}_{\nu}\frac{\partial \theta(\nu,S_\nu)}{\partial S}\exp(-\sigma\epsilon^{-\frac{1}{2}}\int_{\nu}^{\xi} S^{\frac{1}{2}+\gamma}_{\mu}d\mu)dW_\nu-\frac{1}{2}\sigma^2 \int_t^{\xi}S^{2+2\gamma}_{\nu}(\frac{\partial \theta(\nu,S_\nu)}{\partial S})^2\exp(-\sigma\epsilon^{-\frac{1}{2}}\int_{\nu}^{\xi} S^{\frac{1}{2}+\gamma}_{\mu}d\mu)d\nu
\end{dmath*}
where $C(H,S,t)$ is given by initial condition. In our problem:
\begin{equation}\nonumber\nonumber
C(H,S,t)=\sigma S^{1+\gamma}(\theta(t,S)-H)
\end{equation}
We rewrite equation \ref{FK2D} in $Y_t$ notation. This equation does not contain $H_{\nu}$, so will skip conditional expectation regarding $H_t=H$. To simplify notation we will skip $S_t=S$ part as well.
\begin{dmath}\label{vMale}
v(H,S,t|\bar{h})=\penalty10 E(\int_t^T \sigma^2S^{2+2\gamma}_\xi Y_{\nu}^2 d\nu|H_t=H,S_t=S)=\penalty10 E(\int_t^T \sigma^2S^{2+2\gamma}_\xi (C(H,S,t)\exp(-\sigma\epsilon^{-\frac{1}{2}}\int_t^{\xi} S^{\frac{1}{2}+\gamma}_{\nu}d\nu)+\penalty10 \sigma \int_t^{\xi}S^{1+\gamma}_{\nu}\frac{\partial \theta(\nu,S_\nu)}{\partial S}\exp(-\sigma\epsilon^{-\frac{1}{2}}\int_{\nu}^{\xi} S^{\frac{1}{2}+\gamma}_{\mu}d\mu)dW_\nu)^2 d\xi)
\end{dmath}
We will use simple inequality $(a+b)^2\leq (a+b)^2+(a-b)^2=2a^2+2b^2$:
\begin{dmath}\label{vBound}
%\begin{split}
v(H,S,t|\bar{h})\leq\penalty10 2E(\int_t^T \sigma^2S^{2+2\gamma}_\xi C(H,S,t)^2(\exp(-\sigma\epsilon^{-\frac{1}{2}}\int_t^{\xi} S^{\frac{1}{2}+\gamma}_{\nu}d\nu))^2 d\xi)+ \penalty10 2E(\int_t^T \sigma^2S^{2+2\gamma}_\xi(\sigma \int_t^{\xi}S^{1+\gamma}_{\nu}\frac{\partial \theta(\nu,S_\nu)}{\partial S}\exp(-\sigma\epsilon^{-\frac{1}{2}}\int_{\nu}^{\xi} S^{\frac{1}{2}+\gamma}_{\mu}d\mu)dW_\nu)^2 d\xi)
%\end{split}
\end{dmath}
First term is integral regarding temporal variable and its integrand is non negative, thus we can neglect exponential term with negative exponent, since is smaller then $1$:
\begin{dmath}\label{firstTerm}
E(\int_t^T \sigma^2S^{2+2\gamma}_\xi C(H,S,t)^2(\exp(-\sigma\epsilon^{-\frac{1}{2}}\int_t^{\xi} S^{\frac{1}{2}+\gamma}_{\nu}d\nu))^2 d\xi)\leq  E(\int_t^T \sigma^2S^{2+2\gamma}_\xi C(H,S,t)^2 d\xi)=E((\int_t^T  \sigma S^{1+\gamma}_\xi C(H,S,t) dW_{\xi})^2)= E((\int_t^T C(H,S,t) dS_{\xi})^2)=\penalty10 E(C^2(H,S,t)(S_T-S_t)^2)<\infty
\end{dmath}
In above calculation we used Ito-isometry as well. Ito-isometry assumption are fulfilled since CEVs distribution is non central chi square \cite{cox1976valuation} (with appropriate parameters) and non central chi square have finite non central moments. Second term of right side of equation \ref{vBound} could transform by Ito-isometry again, but in different way:
\begin{dmath*}
E(\int_t^T \sigma^2S^{2+2\gamma}_\xi(\sigma \int_t^{\xi}S^{1+\gamma}_{\nu}\frac{\partial \theta(\nu,S_\nu)}{\partial S}\exp(-\sigma\epsilon^{-\frac{1}{2}}\int_{\nu}^{\xi} S^{\frac{1}{2}+\gamma}_{\mu}d\mu)dW_\nu)^2 d\xi)=\penalty10 E(\int_t^T (\int_t^{\xi} \sigma^2S^{1+1\gamma}_\xi S^{1+\gamma}_{\nu}\frac{\partial \theta(\nu,S_\nu)}{\partial S}\exp(-\sigma\epsilon^{-\frac{1}{2}}\int_{\nu}^{\xi} S^{\frac{1}{2}+\gamma}_{\mu}d\mu)dW_\nu)^2 d\xi)
\end{dmath*}
We are moving expectation under integral by $d\xi$, then use Ito-isometry to a inner integral by $d S_\nu$. Both of the theorems require a $L^2$ integrand. This can be achieved in the same way we used previously with first term in \ref{firstTerm} combined with assumed condition on $\frac{\partial \theta(\nu,S_\nu)}{\partial S}$ in \ref{BoundedThetaS}.
\begin{dmath*}
(..)=E(\int_t^T \int_t^{\xi} \sigma^4S^{2+2\gamma}_\xi S^{2+2\gamma}_{\nu}(\frac{\partial \theta(\nu,S_\nu)}{\partial S})^2\exp(-2\sigma\epsilon^{-\frac{1}{2}}\int_{\nu}^{\xi} S^{\frac{1}{2}+\gamma}_{\mu}d\mu)d\nu d\xi)
\end{dmath*}
We use same trick with exponential term as in \ref{firstTerm} and again move expectation sign under integral use Ito-isometry, but in other way. Then move expectation and use Ito-isometry once more.
\begin{dmath*}(..)\leq E(\int_t^T \sigma^2S^{2+2\gamma}_\xi\sigma \int_t^{\xi}S^{2+2\gamma}_{\nu}(\frac{\partial \theta(\nu,S_\nu)}{\partial S})^2d\nu d\xi)=\penalty10 E((\int_t^T \int_t^{\xi}\frac{\partial \theta(\nu,S_\nu)}{\partial S}dS_\nu dS_{\xi})^2)
\end{dmath*}
For similar usage of (iterated) Ito-isometry see \cite{oksendal1997malliavin}.
\begin{dmath*}
E((\int_t^T \int_t^{\xi}\frac{\partial \theta(\nu,S_\nu)}{\partial S}dS_\nu dS_{\xi})^2)=\penalty10 E((\int_t^T (\theta(\xi,S_\xi)-\theta(t,S_t)) dS_{\xi})^2)=\penalty10 E((\int_t^T (\theta(\xi,S_\xi)-\theta(t,S_t)) dS_{\xi})^2)=\penalty10 E(q(T,S_T)-q(t,S)-(S_T-S)\theta(t,S))<+\infty
\end{dmath*}
Since CEV distribution is non-central chi square \cite{cox1976valuation} (with appropriate parameters) and non central chi square have finite non central moments, thus right side of equation is finite. Formal prof is almost identical to we show to prove, that auxiliary PDE solution candidate is finite everywhere. 
\subsection{$V$ is $o(\epsilon^k)$}
Let us remind formula for $v(H,S,t|\bar{h})$ given in \ref{vBound}:
\begin{dmath*}
v(H,S,t|\bar{h})\leq\penalty10 2E(\int_t^T \sigma^2S^{2+2\gamma}_\xi C(H,S,t)^2(\exp(-\sigma\epsilon^{-\frac{1}{2}}\int_t^{\xi} S^{\frac{1}{2}+\gamma}_{\nu}d\nu))^2 d\xi)+ \penalty10 2E(\int_t^T \sigma^2S^{2+2\gamma}_\xi(\sigma \int_t^{\xi}S^{1+\gamma}_{\nu}\frac{\partial \theta(\nu,S_\nu)}{\partial S}\exp(-\sigma\epsilon^{-\frac{1}{2}}\int_{\nu}^{\xi} S^{\frac{1}{2}+\gamma}_{\mu}d\mu)dW_\nu)^2 d\xi)
\end{dmath*}
Both terms in equation are an expressions of type $\exp(-D\epsilon^{-\frac{1}{2}})$, where $D$ is a nonnegative, not $\epsilon$-dependent function. As a $\epsilon\downarrow 0$, $-D\epsilon^{-\frac{1}{2}}\downarrow-\infty$, thus $\exp(-D\epsilon^{-\frac{1}{2}})\downarrow 0$.  In last subsection we prove an integrals of expressions above are finite, thus due to Lebesgue's monotone convergence theorem $v(H,S,t|\bar{h})\downarrow 0$. Since $v(H,S,t|\bar{h})\geq V(H,S,t)\geq 0$, also $V(H,S,t)\downarrow 0$.\\ For any chosen $\forall_{k\in\mathds{N},k>0}$ we try to show that $v(H,S,t|\bar{h})\in o(\epsilon^{\frac{k}{2}})$. We could use L'Hôpital's rule:
\begin{dmath*}
\lim_{\epsilon\downarrow 0}\frac{\exp(-D\epsilon^{-\frac{1}{2}})}{\epsilon^{\frac{k}{2}}}=\lim_{\epsilon\downarrow 0}\frac{\epsilon^{-\frac{k}{2}}}{\exp(D\epsilon^{-\frac{1}{2}})}=\lim_{\epsilon\downarrow 0}
\frac{\frac{k}{2}\epsilon^{-\frac{k}{2}-1}}{-\frac{D}{2}\epsilon^{-1-\frac{1}{2}}\exp(D\epsilon^{-\frac{1}{2}})}=\lim_{\epsilon\downarrow 0}
\frac{\frac{k}{2}\epsilon^{-\frac{k+1}{2}}}{-\frac{D}{2}\exp(D\epsilon^{-\frac{1}{2}})}
\end{dmath*}
Thus if apply L'Hôpital's rule $k-1$ more time we achieve $\exp(-D\epsilon^{-\frac{1}{2}})\in o(\frac{k}{2})$. Using one more time Lebesgue's dominated convergence theorem we get appropriate result.
\bibliographystyle{plain}
\bibliography{PlynnoscOpcjeEng}

\begin{thebibliography}{10}

\bibitem{brandt2004relationship}
Michael~W Brandt and Qiang Kang.
\newblock On the relationship between the conditional mean and volatility of
  stock returns: A latent var approach.
\newblock {\em Journal of Financial Economics}, 72(2):217--257, 2004.

\bibitem{cont2001empirical}
Rama Cont.
\newblock Empirical properties of asset returns: stylized facts and statistical
  issues.
\newblock {\em Quantitative finance}, 2001.

\bibitem{cox1976valuation}
John~C Cox and Stephen~A Ross.
\newblock The valuation of options for alternative stochastic processes.
\newblock {\em Journal of financial economics}, 3(1):145--166, 1976.

\bibitem{glosten1993relation}
Lawrence~R Glosten, Ravi Jagannathan, and David~E Runkle.
\newblock On the relation between the expected value and the volatility of the
  nominal excess return on stocks.
\newblock {\em The journal of finance}, 48(5):1779--1801, 1993.

\bibitem{hsu2008constant}
Ying-Lin Hsu, TI~Lin, and CF~Lee.
\newblock Constant elasticity of variance (cev) option pricing model:
  Integration and detailed derivation.
\newblock {\em Mathematics and Computers in Simulation}, 79(1):60--71, 2008.

\bibitem{wplywNaRynek}
Hong Liu and Jiongmin Yong.
\newblock Option pricing with an illiquid underlying asset market.
\newblock {\em Journal of Economic Dynamics and Control}, 2005.

\bibitem{Longstaff}
Francis~A. Longstaff.
\newblock Option pricing and the martingale restrictionauthor.
\newblock {\em The Review of Financial Studies}, 1995.

\bibitem{cenaPlynnosci}
R.~Eldor M.~Brenner and S.~Hauser.
\newblock The price of options illiquidity.
\newblock {\em The Journal of Finance}, 2001.

\bibitem{oksendal1997malliavin}
Bernt {\O}ksendal.
\newblock Malliavin calculus.
\newblock 1997.

\bibitem{oksendal2003stochastic}
Bernt {\O}ksendal.
\newblock {\em Stochastic differential equations}.
\newblock Springer, 2003.

\bibitem{SemFin}
L.~C.~G. Rogers and Surbjeet Singh.
\newblock The cost of illiquidity and its effects on hedging.
\newblock {\em Mathematical finance}, 2010.

\bibitem{schroder1989computing}
Mark Schroder.
\newblock Computing the constant elasticity of variance option pricing formula.
\newblock {\em the Journal of Finance}, 44(1):211--219, 1989.

\bibitem{BSRozszerzony}
P.~Protter U.~Çetin, R.~Jarrow and M.~Warachka.
\newblock Pricing options in an extended black scholes economy with
  illiquidity: Theory and empirical evidence.
\newblock {\em The Review of Financial Studies}, 2006.

\bibitem{SABRgeneratorL}
Qi~Wu.
\newblock Series expansion of the sabr joint density.
\newblock {\em Mathematical Finance}, 22(2):310--345, 2012.

\end{thebibliography}
\end{document}